\newtheorem{corollary}{Corollary}
\newtheorem{theorem}{Theorem}
\newtheorem{remark}{Remark}
\newtheorem{example}{Example}
\def\BibTeX{{\rm B\kern-.05em{\sc i\kern-.025em b}\kern-.08em
    T\kern-.1667em\lower.7ex\hbox{E}\kern-.125emX}}
\begin{document}

\title{Private Semantic Communications with Separate Blind Encoders\\
%
}

\author{\IEEEauthorblockN{1\textsuperscript{st} Amirreza Zamani}
\IEEEauthorblockA{\textit{School of Electrical Engineering and Computer Science} \\
	\textit{KTH Royal Institute of Technology, Stockholm, Sweden}\\
	Stockholm, Sweden,\\
	Email: amizam@kth.se}
\and
\IEEEauthorblockN{2\textsuperscript{nd} Mikael Skoglund}
\IEEEauthorblockA{\textit{School of Electrical Engineering and Computer Science} \\
	\textit{KTH Royal Institute of Technology, Stockholm, Sweden}\\
	Stockholm, Sweden,\\
	Email: skoglund@kth.se}
%
}

\maketitle

\begin{abstract}
We study a semantic communication problem with a privacy constraint where an encoder consists of two separate parts, e.g., encoder 1 and encoder 2. The first encoder has access to information source $X=(X_1,\ldots,X_N)$ which is arbitrarily correlated with private data $S$. The private data is not accessible by encoder 1, however, the second encoder has access to it and the output of encoder 1. 
A user asks for a task $h(X)$ and the first encoder designs the semantic of the information source $f(X)$ to disclose. Due to the privacy constraints $f(X)$ can not be revealed directly to the user and the second encoder applies a statistical privacy mechanism to produce disclosed data $U$. Here, we assume that encoder 2 has no access to the task and the design of the disclosed data is based on the semantic and the private data.

 In this work, we propose a novel approach where $U$ is produced by solving a privacy-utility trade-off based on the semantic and the private data. We design $U$ utilizing different methods such as using extended versions of the Functional Representation Lemma and the Strong Functional Representation Lemma. We evaluate our design by computing the utility attained by the user. Finally, we study and compare the obtained bounds in a numerical example.
\end{abstract}

\begin{IEEEkeywords}
Private semantic communications, blind encoders, Extended Functional Representation Lemma, Extended Strong Functional Representation Lemma.
\end{IEEEkeywords}

\section{Introduction}
Semantic communication focuses on transmitting a modified version of the original information, tailored for the receiver to extract meaning relevant to a specific task or goal \cite{sherdeniz}. Unlike traditional methods, it accounts for the context, nuances, and connotations of the message, aiming to align with the recipient’s knowledge and expectations to avoid ambiguity \cite{feist2022significance,sherdeniz}. Recently, semantic communication has gained attention as a promising approach for reducing data load in 6G and future networks by transmitting only semantically relevant information. Data privacy is a critical aspect of emerging communication systems. As data usage expands, it is essential to limit the exposure of sensitive information. Semantic communication inherently addresses some privacy concerns by restricting the transmission of sensitive data. However, certain tasks may still require sharing information correlated with private data, which can be challenging to detect and protect.

Related works on the semantic communications and statistical approach to the privacy can be found in
\cite{strinati2024goal, zamani2024semantic, zamani2024multi, task,task2,task3,photis1, photis2, sherdeniz,feist2022significance,strinati20216g,shao2022theory,borz,khodam,Khodam22,kostala,shah, makhdoumi, dwork1, shahab,9457633,e25040679, liu2020robust,sankar, Total, sankar2, deniz4, asoodeh3, Calmon1,  nekouei2, issa,oof}.
\begin{figure}[]
	\centering
	\includegraphics[scale = .55]{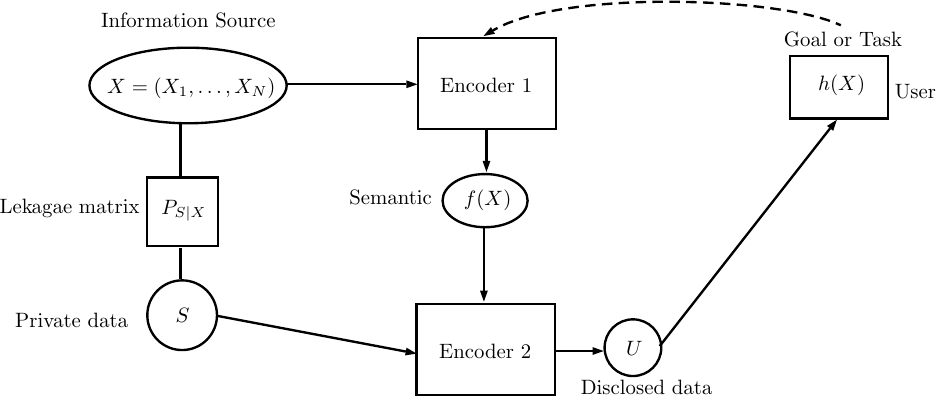}
	\caption{The encoder consists of two separate parties with different goals. The goal is to design $U$ that maximizes the utility while satisfying a certain privacy constraint.}
	\label{motiv}
\end{figure}
we have introduced a semantic communication problem with privacy constraints in \cite{zamani2024semantic} where a privacy-utility trade-off is obtained to produce released data. To solve the privacy-utility trade-off extended versions of the Functional Representation Lemma (FRL) and the Strong Functional Representation Lemma (SFRL) have been used. It has been shown that the released data achieve the optimal trade-off under certain assumptions. In \cite{zamani2024multi}, the private semantic communication considered in \cite{zamani2024semantic} is extended by considering a multi-task scenario and upper and lower bounds on the privacy-utility trade-off have been derived. A multi-user semantic communication problem using deep learning has been considered in \cite{task}. A deep learning based semantic communication model has been considered in \cite{task2} to address multiple transmission tasks. In \cite{task3}, an information bottleneck and adversarial learning approach is proposed to retain users’ privacy against model inversion attacks. An information theoretic approach to goal-oriented semantic communications is proposed in \cite{photis1} and \cite{photis2}, where in \cite{photis1} a rate distortion approach has been used to address a goal-oriented semantic information transmission problem. In \cite{photis2}, a joint source channel coding variation for identifying goal-oriented semantic aspects of messages over noisy channels with multiple distortion constraints has been considered.
Considering statistical privacy mechanism designs, in \cite{borz}, the problem of privacy-utility trade-off considering mutual information both as measures of privacy and utility is studied. Under perfect privacy assumption, it has been shown that the privacy mechanism design problem can be reduced to linear programming. 
In \cite{khodam}, privacy mechanisms with a per-letter privacy criterion considering an invertible leakage matrix have been designed allowing a small leakage. This result is generalized to a non-invertible leakage matrix in \cite{Khodam22}. In \cite{kostala}, \emph{secrecy by design} problem is studied under the perfect secrecy assumption. Bounds on secure decomposition have been derived using the Functional Representation Lemma. 
In \cite{shah}, the privacy problems considered in \cite{kostala} are generalized by relaxing the perfect secrecy constraint and allowing some leakage. Furthermore, more bounds have been derived in \cite{zamanistatistical} using extended versions of the FRL and the SFRL and `\emph{separation technique}'.

In this paper, we study a scenario illustrated in Fig. \ref{motiv}. Here, the encoder consists of two separate parties, encoder 1 and encoder 2. Encoder 1 has access to the information source but not the private data. Moreover, encoder 2 has access to the private data and not the information source.
We refer to the encoders as '\emph{blind}' since the first encoder has no access to the private data, and the second encoder is unaware of the task, respectively.
 The goal of encoder 1 is to design a semantic based on the task for the user and encoder 2 wants to design the disclosed data that satisfies the privacy constraint and keeps as much information as possible of the semantic. 
 Hence, it is obtained by solving a privacy-utility trade-off problem.   
  We find upper and lower bounds on the privacy-utility trade-off. To find lower bounds we use extended versions of the FRL and the SFRL and we obtain simple privacy mechanism designs. The proposed schemes offer mathematical approaches to design goal-oriented privacy mechanisms. These mechanisms not only facilitate the receiver in achieving its goal but also guarantee the privacy of the sensitive data from the recipient. To evaluate our design we compute the utility gained by the user and show that the distance between the upper and lower bounds on the user's utility can be small. Finally, we study the obtained bounds on the user's utility in a numerical experiment which is based on MNIST data set.

\section{System model and Problem Formulation}
Let $P_{SX}=P_{SX_1,\cdots,X_N}$ denote the joint distribution of discrete random variables $X=(X_1,\ldots,X_N)$ and $S$ defined on finite alphabets $\mathcal{X}=\mathcal{X}_1\times\cdots\times\mathcal{X}_N$ and $\cal{S}$. 
Here, $S$ denotes the latent private data, while $X$ is the information source of dimension $N$. The semantic of $X$ is defined as a function of $X$ denoted by $f(X):\mathcal{X}_1\times\cdots\times\mathcal{X}_N\rightarrow \mathbb{R}^T$ with dimension $T\leq N$. Furthermore, the goal or task of communication is represented by some other function of $X$, i.e., $h(X):\mathcal{X}_1\times\cdots\times\mathcal{X}_N\rightarrow \mathbb{R}^K$, with dimension $K\leq N$. 
In general $f\neq h$, since $f(\cdot)$ needs to be designed in an efficient way and efficiency can be defined based on different parameters. In this paper, encoder 1 designs the semantic which satisfies the following constraints:
\begin{align}
\gamma_1\leq I(f(X);h(X))&\leq \gamma_2 < H(h(X)),\label{varoo1}\\
h(f(X))&\leq \gamma_3.\label{varoo2}
\end{align}
The constraints $I(f(X);h(X))\leq \gamma_2$ and $h(f(X))\leq \gamma_3$ represent the limitations of the first encoder. As an example, $I(f(X);h(X))\leq \gamma_2$ can correspond to the limited bandwidth of the encoder. Specifically, $h(f(X))\leq \gamma_3$ corresponds to the bounded compression rate of the semantics. Additionally, the constraint $\gamma_1\leq I(f(X);h(X))$ represents the distortion condition of the semantics; for instance, the correlation between the task and the semantics must exceed a threshold defined based on the application. For more detail see \cite{sherdeniz} and \cite{photis1}. As we outlined earlier we refer to the encoder 1 \emph{blind} since it has no access to the private data and the design of the semantic is based on the information source and the task. Since the semantic is correlated with the private data, encoder 1 sends it to the second encoder. Encoder 2 designs the disclosed data described by RV $U\in \mathcal{U}$ based on the semantic and the private data. Similarly, we refer to encoder 2 \emph{blind} since it has no access to the task of communication. To design $U$, the goal of encoder 2 is to keep as much information as possible about the semantic while satisfying a privacy constraint. We use mutual information to measure the privacy leakage. Hence, encoder 2 obtain $U$ by solving the following privacy-utility trade-off
\begin{align}
h_{\epsilon}(P_{S,f(X)})&=\!\!\!\!\!\!\!\!\sup_{\begin{array}{c} 
	\substack{P_{U|S,f(X)}: I(U;S)\leq\epsilon,}
	\end{array}}\!\!\!\!\!\!\!\!I(f(X);U),\label{main1ch11}
\end{align} 
where the semantic $f(X)$ satisfies \eqref{varoo1} and \eqref{varoo2}, $P_{S,f(X)}$ is the joint distribution of $(S,f(X))$, and $P_{U|S,f(X)}$ describes the conditional distribution. Moreover, the utility achieved by the user can be measured by the mutual information between the disclosed data $U$ and the task $h(X)$, i.e., $I(h(X);U)$. In contrast to the present work, in \cite{zamani2024semantic}, we assumed that the encoder has access to the private data $S$, the semantic $f(X)$ and the task $h(X)$, and the design of the disclosed data $U$ is based on them. 
\begin{example}
	A scenario that motivates our model is to let $X$ be the outcome for the patients, encoder 1 be a laboratory, encoder 2 be a hospital and the user be a company which asks for a task based on the information source. For some security reason only the laboratory has access to the sensitive information.
	\end{example}
\begin{remark}
	\normalfont
	For $\epsilon=0$ and $f(X)=X$, \eqref{main1ch11} lead to the secret-dependent perfect privacy function $h_0(P_{S,X})$, studied in \cite{kostala}, where upper and lower bounds on $h_0(P_{SX})$ have been derived. In \cite{shah}, we have strengthened these bounds. Additionally, for $\epsilon\geq 0$ and $f(X)=X$, the problem $h_{\epsilon}(P_{S,X})$ has been studied in \cite{shah}, where upper and lower bounds are obtained. The privacy mechanism design is based on the extended versions of the FRL and the SFRL.
\end{remark}
\begin{remark}
	\normalfont
	A multi-task scenario semantic communication problem has been studied in \cite{zamani2024multi}, where a user has $L$ tasks which is a subset of the set $\{X_1,\ldots,X_N\}$. Similar to \cite{zamani2024semantic}, in \cite{zamani2024multi} we assumed that the encoder has access to the private data, the semantic and the task.
\end{remark}
\section{Main Results}
In this section, we obtain upper and lower bounds on \eqref{main1ch11} and the utility attained by the user $I(U;h(X))$.
Before stating the next theorem we derive an expression for $I(f(X);U)$. We have
\begin{align}
I(f(X);U)&=I(S,f(X);U)-I(S;U|f(X)),\nonumber\\&=I(S;U)+H(h(X)|S)\nonumber\\&-H(h(X)|U,S)-I(S;U|h(X)).\label{key}
\end{align}
Moreover, we expand the term $I(U;h(X),f(X))$ as follows
\begin{align}\label{key2}
I(U;h(X),f(X))&=I(U;f(X))+I(U;h(X)|f(X))\nonumber\\&=I(U;h(X))+I(U;f(X)|h(X)).
\end{align}
Next, we derive lower and upper bounds on $h_{\epsilon}(P_{S,f(X)})$ and the utility $I(h(X);U)$. For deriving lower bounds we use EFRL \cite[Lemma 4]{shah} and ESFRL \cite[Lemma 5]{shah}.
For simplicity the remaining results are derived under the assumption $f(\cdot):\mathcal{X}_1\times\ldots\mathcal{X}_N\rightarrow \mathbb{R}$, i.e., $T=1$. 
\begin{theorem}
	For any $0\leq \epsilon$, joint distribution $P_{S,f(X),h(X)}$ and any semantic $f(X)$ which satisfies \eqref{varoo1} and \eqref{varoo2}, we have
	\begin{align}\label{th2ch11}
	\!\!\!\max\{L_{h}^{1}(\epsilon),L_{h}^{2}(\epsilon)\}\!\leq\! h_{\epsilon}(P_{S,f(X)})\leq H(f(X)|S)+\epsilon,
	\end{align}
	where
	\begin{align}
	L_{h}^{1}(\epsilon) &= H(f(X)|S)-H(S|f(X))+\epsilon,\label{s1}\\
	L_{h}^{2}(\epsilon) &= H(f(X)|S)-\alpha H(S|f(X))+\epsilon\nonumber\\&-(1-\alpha)\left( \log(I(S;f(X))+1)+4 \right),\label{s2}
	\end{align}
	with $\alpha=\frac{\epsilon}{H(S)}$.
	The lower bound in \eqref{th2ch11} is tight if $H(S|f(X))=0$, i.e., $S$ is a deterministic function of $f(X)$. Furthermore, if the lower bound $L_{h}^{1}(\epsilon)$ is tight then we have $H(S|f(X))=0$. Additionally, for the utility attained by the user we have
	\begin{align}\label{util}
		\!\!\!\max\{L^{1}(\epsilon),L^{2}(\epsilon)\}\leq L^{3}(\epsilon)\!\leq\! I(h(X);U)&\leq H(f(X)|S)+\epsilon\nonumber\\&+H(h(X)|f(X)),
	\end{align}
	where 
	\begin{align}
	L^{1}(\epsilon) &= L_{h}^{1}(\epsilon)-H(f(X)|h(X)),\\
	L^{2}(\epsilon) &= L_{h}^{2}(\epsilon)-H(f(X)|h(X)),\\
	L^{3}(\epsilon) &= h_{\epsilon}(P_{S,f(X)})-H(f(X)|h(X)),
	\end{align}
	and $L_{h}^{1}(\epsilon)$ and $L_{h}^{2}(\epsilon)$ are defined in \eqref{s1} and \eqref{s2}.
\end{theorem}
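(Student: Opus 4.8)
The plan is to dispatch the four assertions of the theorem one at a time, obtaining every upper bound from elementary mutual-information identities and every lower bound by feeding the pair $(S,f(X))$ into the extended representation lemmas with $f(X)$ in the role of the useful message. For the right-hand side of \eqref{th2ch11}, fix any feasible $P_{U|S,f(X)}$ and use the chain rule as in \eqref{key}: $I(f(X);U)=I(S,f(X);U)-I(S;U|f(X))=I(S;U)+I(f(X);U|S)-I(S;U|f(X))$. Feasibility gives $I(S;U)\le\epsilon$, while $I(f(X);U|S)\le H(f(X)|S)$ and $I(S;U|f(X))\ge0$, so $I(f(X);U)\le H(f(X)|S)+\epsilon$, and taking the supremum over feasible $P_{U|S,f(X)}$ gives $h_{\epsilon}(P_{S,f(X)})\le H(f(X)|S)+\epsilon$.

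For the lower bounds I would apply the Extended FRL \cite[Lemma 4]{shah} to $(S,f(X))$ with leakage budget $\epsilon$: it returns a law $P_{U|S,f(X)}$ with $I(U;S)\le\epsilon$, $H(f(X)|U,S)=0$, and $I(f(X);U)\ge H(f(X)|S)-H(S|f(X))+\epsilon$, and since this $U$ is admissible in \eqref{main1ch11} we get $h_{\epsilon}(P_{S,f(X)})\ge L_h^1(\epsilon)$. The construction may be read as time-sharing, with weight $\alpha=\epsilon/H(S)$, between releasing $f(X)$ outright and the plain functional representation of $f(X)$ given $S$. Repeating the argument with the Extended SFRL \cite[Lemma 5]{shah}, which uses the strong representation in the $(1-\alpha)$-weighted branch and hence controls the corresponding overhead by $\log(I(S;f(X))+1)+4$ in place of $H(S|f(X))$, yields $h_{\epsilon}(P_{S,f(X)})\ge L_h^2(\epsilon)$; together these give the left side of \eqref{th2ch11}. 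The tightness statements then fall out of \eqref{s1} against the upper bound: $H(S|f(X))=0$ makes $L_h^1(\epsilon)$ equal $H(f(X)|S)+\epsilon$, so the sandwich collapses, and conversely $L_h^1(\epsilon)=H(f(X)|S)+\epsilon$ forces $H(S|f(X))=0$.

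For the utility I would rearrange \eqref{key2} into $I(h(X);U)=I(U;f(X))+I(U;h(X)|f(X))-I(U;f(X)|h(X))$, discard the nonnegative middle term, and use $I(U;f(X)|h(X))\le H(f(X)|h(X))$ to obtain $I(h(X);U)\ge I(f(X);U)-H(f(X)|h(X))$. Evaluated at the optimal $U$ this is $L^3(\epsilon)$; evaluated at the EFRL and ESFRL constructions it is $L^1(\epsilon)$ and $L^2(\epsilon)$; and $\max\{L^1,L^2\}\le L^3$ is exactly $\max\{L_h^1,L_h^2\}\le h_{\epsilon}(P_{S,f(X)})$ after subtracting the common term $H(f(X)|h(X))$. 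The upper bound on the utility comes from $I(h(X);U)\le I(h(X),f(X);U)=I(f(X);U)+I(h(X);U|f(X))\le h_{\epsilon}(P_{S,f(X)})+H(h(X)|f(X))\le H(f(X)|S)+\epsilon+H(h(X)|f(X))$.

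The mutual-information accounting above is routine; the main obstacle is confirming that the two representation lemmas apply verbatim in this model. Concretely one has to check that the $U$ delivered by \cite[Lemma 4]{shah} and \cite[Lemma 5]{shah} is consistent with the blind-encoder structure---it is built from $(S,f(X))$ and independent local randomness alone, so the Markov relation $U-(S,f(X))-h(X)$ tacitly used in \eqref{key2} holds automatically---and that the ESFRL time-sharing reproduces exactly the convex combination with $\alpha=\epsilon/H(S)$ appearing in \eqref{s2}. The additive constant $4$ and the $\log(I(S;f(X))+1)$ term are inherited directly from the strong functional representation lemma and need not be re-derived here.
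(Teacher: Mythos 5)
Your handling of the two bounds in \eqref{th2ch11} and of the utility bounds in \eqref{util} follows essentially the paper's own route: the same expansion of $I(f(X);U)$ for the upper bound, the EFRL/ESFRL constructions of \cite[Lemmas 4 and 5]{shah} (used as black boxes, as the paper does via \cite[Theorem 2]{shah}) for $L_h^{1}(\epsilon)$ and $L_h^{2}(\epsilon)$, and the rearrangement of \eqref{key2} — dropping $I(U;h(X)|f(X))\ge 0$ and bounding $I(U;f(X)|h(X))\le H(f(X)|h(X))$ — evaluated at $\bar{U}$, $\tilde{U}$ and $U^{*}$ for the utility lower bounds. (Your ``time-sharing'' reading of the EFRL is only a heuristic gloss — the actual construction adds a Warner-type randomized response over $S$ to the FRL/SFRL output — but since you invoke the lemmas rather than re-derive them, this does not affect correctness.)

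The genuine gap is in the converse tightness claim. ``The lower bound $L_h^{1}(\epsilon)$ is tight'' means $h_{\epsilon}(P_{S,f(X)})=L_h^{1}(\epsilon)$, not that $L_h^{1}(\epsilon)$ coincides with the upper bound $H(f(X)|S)+\epsilon$. Your argument — ``$L_h^{1}(\epsilon)=H(f(X)|S)+\epsilon$ forces $H(S|f(X))=0$'' — only proves the trivial observation that the two sides of \eqref{th2ch11} coincide exactly when $H(S|f(X))=0$; it says nothing about the case where $h_{\epsilon}(P_{S,f(X)})$ sits strictly between $L_h^{1}(\epsilon)$ and the upper bound, which is precisely the situation the converse must rule out. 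To prove the stated implication one has to show that whenever $H(S|f(X))>0$ there exists an admissible $P_{U|S,f(X)}$ with $I(U;S)\le\epsilon$ and $I(U;f(X))>H(f(X)|S)-H(S|f(X))+\epsilon$, i.e., that the EFRL-type construction is strictly suboptimal in that regime. This is the content of \cite[Theorem 2]{shah}, which the paper invokes for exactly this point and which your proposal neither reproduces nor cites for this direction; as written, that part of the theorem is unproven.
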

\begin{proof}
	Using \eqref{key} we have
	$
	I(U;f(X))=I(U;S)+H(f(X)|S)-I(U;S|f(X))-H(f(X)|S,U),
	$
	which results in 
	$
	I(U;h(X))\leq \epsilon+H(f(X)|S).
	$
	For deriving the lower bounds $L_{h}^{1}(\epsilon)$ and $L_{h}^{2}(\epsilon)$ we use EFRL and ESFRL using $S\leftarrow X$ and $f(X)\leftarrow Y$. 
	Let $\bar{U}$ and $\tilde{U}$ be the output of the EFRL and the ESFRL. Using the same arguments in \cite[Theorem 2]{shah} we have
	\begin{align}
	I(\bar{U};f(X))&\geq L_{h}^{1}(\epsilon),\\
	I(\tilde{U};f(X))&\geq L_{h}^{2}(\epsilon),\\
	I(\bar{U};S)&=I(\tilde{U};S)=\epsilon.
	\end{align}
	For more detail see the proof for \cite[Theorem 2]{shah}. The main idea for constructing a RV $U$ that satisfies EFRL or ESFRL constraints is to add a randomized response to the output of the FRL or the SFRL. The randomization introduced in \cite{warner} is taken over $S$. The results about tightness can be proved by using \cite[Theorem 2]{shah}. To prove the upper bound in \eqref{util} let $U$ satisy the leakage constrain $I(U;S)\leq \epsilon$. Using \eqref{key2}, we have
	 \begin{align}
	 I(U;h(X))&\leq I(U;f(X))+I(U;h(X)|f(X))\nonumber\\& \leq I(U;f(X)) + H(h(X)|f(X))\nonumber\\ &\leq h_{\epsilon}(P_{S,f(X)})+ H(h(X)|f(X))\nonumber\\ &\stackrel{(a)}{\leq} H(f(X)|S)+\epsilon+ H(h(X)|f(X)),
	 \end{align}
	 where in step (a) we used the upper bound in \eqref{th2ch11}. To achieve the lower bounds $L^{1}(\epsilon)$ and $L^{2}(\epsilon)$, let $\bar{U}$ and $\tilde{U}$ achieve $L_{h}^{1}(\epsilon)$ and $L_{h}^{2}(\epsilon)$, respectively. Furthermore, let $U^*$ attain $h_{\epsilon}(P_{S,f(X)})$. Using \eqref{key2}, for any $U$ we have
	 \begin{align}\label{kir}
	 I(U\!;\!h(X))&\!\geq\! I(U\!;\!f(X))\!+\!I(U\!;\!h(X)|f(X))\!-\!H(f(X)|h(X))\nonumber\\&\geq I(U;f(X))-H(f(X)|h(X)). 
	 \end{align}
	 Substituting $U$ by $\bar{U}$ in \eqref{kir} we have
	 \begin{align*}
	 I(\bar{U};h(X))&\geq I(\bar{U};f(X))-H(f(X)|h(X))\\&=L_{h}^{1}(\epsilon)-H(f(X)|h(X))=L^1(\epsilon).
	 \end{align*}
	 Furthermore, substituting $U$ by $\tilde{U}$ in \eqref{kir} we obtain
	  \begin{align*}
	  I(\bar{U};h(X))&\geq I(\tilde{U};f(X))-H(f(X)|h(X))\\&=L_{h}^{2}(\epsilon)-H(f(X)|h(X))=L^2(\epsilon).
	  \end{align*}
	  Finally, substituting $U$ by $U^*$ in \eqref{kir} we have
	  \begin{align*}
	  I(\bar{U};h(X))&\geq I(U^*;f(X))-H(f(X)|h(X))=L^3(\epsilon).
	  \end{align*}
	\end{proof}
\begin{remark}
	Although $L^3(\epsilon)\geq \max \{L^1(\epsilon),L^2(\epsilon)\}$, computing $L^3(\epsilon)$ can be challenging since we need to find $h_{\epsilon}(P_{S,f(X)})$. In some cases, the upper bound in \eqref{th2ch11} is tight, but not in general. The lower bounds $L^1(\epsilon)$ and $L^2(\epsilon)$ can be easily computed and the privacy mechanisms that attain them are simple since the extended versions of the FRL and the SFRL have constructive proofs.
\end{remark}
\begin{example}
	Let the private data $S$ be a deterministic function of the semantic $f(X)$, i.e., $H(S|f(X))=0$. In this case, $h_{\epsilon}(P_{S,f(X)})=H(f(X)|S)+\epsilon=L_{h}^{1}(\epsilon)$ and we have
	\begin{align*}
	L^3(\epsilon)=L^1(\epsilon)=H(f(X)|S)+\epsilon-H(f(X)|h(X)).
	\end{align*}
	Furthermore, the distance between the upper and lower bounds on $I(U;h(X))$ is
	$
	H(f(X)|h(X))+H(h(X)|f(X)).
	$
	When the semantic is highly correlated with the task the gap can be small.
\end{example}
\begin{remark}
	More lower bounds on $h_{\epsilon}(P_{S,X})$ have been obtained in \cite{zamanistatistical} by combining the extended versions of the FRL and the SFRL, and `\emph{separation technique}'. Due to the complexity of the lower bounds obtained in \cite{zamanistatistical}, we used only $L_{h}^{1}(\epsilon)$ and $L_{h}^{2}(\epsilon)$. 
\end{remark}
Next, we show that tightness of the upper bound in Theorem 1 can be improved using the concept of \emph{common information}. In other words, instead of having $H(S|f(X))=0$ we propose larger set of distributions that the upper bound is attained. To do so we use the definition of the common information between $X$ and $Y$ as introduced in \cite{wyner}.
\begin{corollary}\label{sefff}
	For any $0\leq \epsilon$, if the common information and mutual information between the private data $S$ and semantic $f(X)$ are equal then we have
	$
	h_{\epsilon}(P_{S,f(X)})=H(f(X)|S)+\epsilon.
	$
\end{corollary}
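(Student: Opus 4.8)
The plan is to keep the upper bound $h_{\epsilon}(P_{S,f(X)})\le H(f(X)|S)+\epsilon$ from \eqref{th2ch11} and build a matching mechanism, so that the only thing I need to produce is a feasible $P_{U|S,f(X)}$ with $I(U;S)\le\epsilon$ and $I(f(X);U)\ge H(f(X)|S)+\epsilon$. I would state the result in the regime $\epsilon\le I(S;f(X))$; outside it $H(f(X)|S)+\epsilon> H(f(X))\ge h_{\epsilon}$, so there the identity should be read with a $\min$ against $H(f(X))$ — the same caveat that already affects the upper bound in Theorem 1.

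First I would turn the hypothesis into usable structure. The equality $C(S;f(X))=I(S;f(X))$ is exactly the condition under which the whole dependence between $S$ and $f(X)$ is carried by a common deterministic function: invoking the characterization of this extremal case from \cite{wyner} (or arguing directly — an optimal Wyner variable $W^{\star}$ satisfies $S-W^{\star}-f(X)$ and $I(S,f(X);W^{\star})=I(S;f(X))$, which forces $I(W^{\star};S)=I(W^{\star};f(X))=I(S;f(X))$ together with the further Markov chains $S-f(X)-W^{\star}$ and $f(X)-S-W^{\star}$, after which $W^{\star}$ can be reduced to a function of $S$ and of $f(X)$) one obtains a random variable $W$ with $W=g_{1}(S)=g_{2}(f(X))$ a.s., $S-W-f(X)$ a Markov chain, and $H(W)=I(S;f(X))$. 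This hands me two identities: $H(W|f(X))=0$, since $W$ is a function of $f(X)$; and $H(f(X)|W)=H(f(X)|S,W)=H(f(X)|S)$, since $S-W-f(X)$ and $W$ is a function of $S$.

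Next I would run the construction from the proof of Theorem 1, but with $S$ replaced throughout by its common part $W$: apply EFRL \cite[Lemma 4]{shah} to the pair $(W,f(X))$ (with $W$ in the role of ``$X$'' and $f(X)$ in the role of ``$Y$'') at leakage level $\epsilon$. This yields $U$ with $I(U;W)=\epsilon$, $H(f(X)|W,U)=0$, and, by the estimate behind $L_{h}^{1}$ in \eqref{s1}, $I(f(X);U)\ge H(f(X)|W)-H(W|f(X))+\epsilon=H(f(X)|S)+\epsilon$ by the two identities above. Feasibility is then immediate: the EFRL output is built from $(W,f(X))$ only, so $U-(W,f(X))-S$, which together with $S-W-f(X)$ collapses to the Markov chain $U-W-S$; hence $I(U;S)\le I(U;W)=\epsilon$ by the data-processing inequality. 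Combining $I(f(X);U)\ge H(f(X)|S)+\epsilon$ with the upper bound gives the claimed equality.

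The step I expect to be the main obstacle is the structural reduction — passing from the one-line scalar equality $C(S;f(X))=I(S;f(X))$ to a genuine common deterministic function $W=g_{1}(S)=g_{2}(f(X))$ with $S-W-f(X)$ Markov; everything afterwards (the entropy bookkeeping and the EFRL invocation) is a routine transcription of the proof of Theorem 1 with $S$ replaced by $W$. It is worth noting that this strictly enlarges the tightness condition $H(S|f(X))=0$ of Theorem 1, which is simply the special case $W=S$.
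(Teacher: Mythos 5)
Your construction is correct, and it is worth noting how it relates to what the paper actually does: the paper's ``proof'' of this corollary is a one-line pointer to \cite[Theorem~3]{shah} and \cite[Proposition~3]{shah}, i.e., to the result that $h_\epsilon(P_{XY})=H(Y|X)+\epsilon$ whenever the Wyner common information of $(X,Y)$ equals $I(X;Y)$. Your argument essentially reconstitutes that cited result from scratch: extract the common part $W=g_1(S)=g_2(f(X))$ with $S-W-f(X)$, observe $H(W|f(X))=0$ and $H(f(X)|W)=H(f(X)|S)$, and then rerun the EFRL achievability of Theorem~1 with $W$ playing the role of the private data, the leakage bound $I(U;S)\le I(U;W)=\epsilon$ following from the collapsed chain $U-W-S$. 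All of these steps check out, and your remark that $H(S|f(X))=0$ is the special case $W=S$ is exactly the point of the corollary. The one place your sketch leans on an unproved (though true and known) fact is the structural reduction itself: going from an optimal Wyner variable $W^\star$ with $I(S,f(X);W^\star)=I(S;f(X))$ to a \emph{deterministic} common function requires the double-Markov argument (from the two chains $W^\star-S-f(X)$ and $W^\star-f(X)-S$ one extracts a common part; cf.\ the characterization of $C_{\mathrm{Wyner}}=I$ due to Kamath--Anantharam, or Cs\'isz\'ar--K\"orner's double-Markov exercise), and this should be cited rather than asserted, since Wyner's original paper does not state the equality case. Finally, your restriction to $\epsilon\le I(S;f(X))$ is not a weakness but a genuine precision: for larger $\epsilon$ one has $H(f(X)|S)+\epsilon>H(f(X))\ge h_\epsilon$, so the equality as printed (for all $\epsilon\ge 0$) cannot hold verbatim; the same implicit restriction (via $\alpha=\epsilon/H(S)\le 1$) already underlies Theorem~1 and the EFRL construction.
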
 
\begin{proof}
	The proof is based on \cite[Theorem 3]{shah} and \cite[Proposition 3]{shah}.
\end{proof}
Corollary 1 improves the condition $H(S|f(X))=0$ for achieving the upper bound in Theorem 1.
Consequently, when the semantic $f(X)$ is a deterministic function of $S$ i.e., $H(f(X)|S)=0$, we have
$
h_{\epsilon}(P_{S,f(X)})=\epsilon.
$
Using Corollary 1, we have
\begin{align}
&H(f(X)|S)+\epsilon-H(f(X)|h(X))\leq I(h(X);U)\\&\leq H(f(X)|S)+\epsilon+H(h(X)|f(X)).
\end{align} 
Next, we find an upper bound on the utility attained by the user $I(U;h(X))$ which does not depend on the semantic.
\begin{corollary}
	For any $0\leq \epsilon$, joint distribution $P_{S,f(X),h(X)}$ and any semantic $f(X)$ which satisfies \eqref{varoo1} and \eqref{varoo2}, we have
	\begin{align}
	I(U;h(X))\leq \epsilon+\gamma_2-\gamma_1+H(h(X)).
	\end{align}
\end{corollary}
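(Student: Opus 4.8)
The plan is to bound $I(U;h(X))$ using the chain-rule expansion \eqref{key2} together with the constraints \eqref{varoo1} and \eqref{varoo2} that the semantic must satisfy, so that the resulting bound no longer contains the potentially large term $H(h(X)|f(X))$ that appears in Theorem 1. First I would start from the identity $I(U;h(X),f(X))=I(U;h(X))+I(U;f(X)|h(X))$ derived in \eqref{key2}, which gives $I(U;h(X))\le I(U;h(X),f(X))=I(U;f(X))+I(U;h(X)|f(X))$. The first term $I(U;f(X))$ is at most $h_{\epsilon}(P_{S,f(X)})$ by definition of the privacy-utility trade-off, and hence at most $H(f(X)|S)+\epsilon$ by the upper bound in \eqref{th2ch11}; but I expect we want to re-route this differently to get the $\gamma$'s in.

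The cleaner route is: bound $I(U;h(X))\le I(U,f(X);h(X)) = I(f(X);h(X)) + I(U;h(X)\mid f(X))$. Now $I(f(X);h(X))\le\gamma_2$ by \eqref{varoo1}. For the conditional term, write $I(U;h(X)\mid f(X))\le H(h(X)\mid f(X)) = H(h(X)) - I(f(X);h(X)) \le H(h(X)) - \gamma_1$, again using \eqref{varoo1}. Adding these gives $I(U;h(X))\le \gamma_2 + H(h(X)) - \gamma_1$. To recover the $+\epsilon$ term one instead keeps $I(U;h(X)\mid f(X))$ bounded via the Markov-type structure and the leakage constraint rather than by $H(h(X)\mid f(X))$ alone: one would argue $I(U;h(X)\mid f(X))\le I(U;h(X),S\mid f(X)) = I(U;S\mid f(X)) + I(U;h(X)\mid f(X),S)$, bound $I(U;S\mid f(X))\le I(U;S)\le\epsilon$ (when the appropriate Markov chain $U - (f(X),S) - h(X)$ or independence holds so the last term vanishes or is absorbed), and combine with $I(f(X);h(X))\le\gamma_2$ and the $H(h(X))-\gamma_1$ slack appropriately, so the final bound reads $I(U;h(X))\le\epsilon+\gamma_2-\gamma_1+H(h(X))$.

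The key steps in order are: (i) expand $I(U;h(X))$ via \eqref{key2} or the symmetric expansion of $I(U,f(X);h(X))$; (ii) apply $I(f(X);h(X))\le\gamma_2$ from \eqref{varoo1}; (iii) apply $I(f(X);h(X))\ge\gamma_1$ from \eqref{varoo1} to control $H(h(X)\mid f(X))=H(h(X))-I(f(X);h(X))\le H(h(X))-\gamma_1$; (iv) use the privacy constraint $I(U;S)\le\epsilon$ to contribute the $+\epsilon$ term, exploiting that $U$ is generated from $(S,f(X))$ so that $U-(S,f(X))-h(X)$ forms a Markov chain; (v) collect terms to obtain $\epsilon+\gamma_2-\gamma_1+H(h(X))$.

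The main obstacle I anticipate is step (iv): making the $+\epsilon$ term appear rigorously requires justifying that the conditional mutual information $I(U;h(X)\mid f(X),S)$ can be discarded or absorbed, which rests on the Markov chain $U - (S,f(X)) - h(X)$ that holds because encoder 2 designs $U$ only from $(S,f(X))$. One must be careful not to double-count the slack from $\gamma_1$ and $\gamma_2$ — i.e., to check that $\gamma_2 - \gamma_1 \ge 0$ is consistent and that the bound is not simply looser than $H(h(X))+\epsilon$. Once the Markov structure is invoked correctly, the remaining manipulations are routine chain-rule bookkeeping, so the proof should be short, essentially a three- or four-line chain of inequalities mirroring the upper-bound derivation already given in the proof of Theorem 1 but substituting the $\gamma$-constraints for the entropy term $H(h(X)\mid f(X))$.
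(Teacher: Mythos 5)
Your ``cleaner route'' is correct and is genuinely different from the paper's argument. Writing $I(U;h(X))\le I(U,f(X);h(X))=I(f(X);h(X))+I(U;h(X)\mid f(X))\le \gamma_2+H(h(X)\mid f(X))=\gamma_2+H(h(X))-I(f(X);h(X))\le \gamma_2+H(h(X))-\gamma_1$ already proves the corollary, because $\epsilon\ge 0$ makes the stated bound $\epsilon+\gamma_2-\gamma_1+H(h(X))$ weaker than what you derived; the only missing sentence is that trivial observation. The paper instead specializes its Theorem~1 utility bound, $I(U;h(X))\le \epsilon+H(f(X)\mid S)+H(h(X)\mid f(X))$, and then invokes \eqref{varoo1}--\eqref{varoo2}, so in the paper the $\epsilon$ arises from the leakage constraint through $I(U;f(X))\le h_{\epsilon}(P_{S,f(X)})\le H(f(X)\mid S)+\epsilon$, whereas your decomposition never touches $H(f(X)\mid S)$ and uses only \eqref{varoo1}. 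Your route is arguably more self-contained; the paper's buys consistency with its earlier bound. Your side remark is also apt: since $\gamma_1\le\gamma_2$, the corollary's bound is never tighter than the trivial $I(U;h(X))\le H(h(X))$, which your derivation makes explicit.

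The genuine flaw is the branch you propose for ``recovering the $+\epsilon$ term'': it is both unnecessary and, as written, invalid. The step $I(U;S\mid f(X))\le I(U;S)$ is false in general --- conditioning can increase mutual information, and in this very setup, where $U$ is generated from $(S,f(X))$, it typically does. Concretely, if $S$ and $f(X)$ are independent uniform bits and encoder~2 outputs $U=S\oplus f(X)$, then $I(U;S)=0\le\epsilon$ while $I(U;S\mid f(X))=H(S)$. The identity $I(U;S\mid f(X))-I(U;S)=I(U;f(X)\mid S)-I(U;f(X))$ shows the sign can go either way, and the Markov chain $U-(S,f(X))-h(X)$ (which does hold and does annihilate $I(U;h(X)\mid f(X),S)$) does not repair this. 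So drop that branch entirely: your first chain of inequalities plus ``$\epsilon\ge 0$'' is a complete proof, or alternatively follow the paper and plug \eqref{varoo1}--\eqref{varoo2} into the upper bound of \eqref{util}.
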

\begin{proof}
	The proof is straightforward since by using Theorem 1 we have
	\begin{align*}
	I(U;h(X))\leq& \epsilon + H(f(X)|S)+H(h(X)|f(X))\\ \leq&\epsilon+H(h(X)) + \gamma_2-\gamma_1,
	\end{align*}
	where in last line we used \eqref{varoo1} and \eqref{varoo2}.
\end{proof}
\section{Numerical Experiment}
In this section, we provide an example to evaluate the bounds obtained in Theorem 1. To do so, we use the MNIST data set which contains $60000$ images illustrating handwritten digits from $0$ to $9$. Let RV $X$ (information source) denote the images in the data set, i.e., $|\mathcal{X}|=6000$. 
Let 
$h(X)$ represent the task that is a feature extracted from the data set. In this example, the task is defined as a "quantized histogram." To determine the quantized histogram of the data set, we first convert the images into black and white by quantizing them. The histogram of any image in the data set is then computed as the ratio of the number of white pixels to the total number of pixels. Consequently, the histogram of any image can be represented by a value within the interval $[0,1]$. 
For quantizing the histogram of any image, we use the following intervals: Interval~1 $=[0,0.1)$, Interval~2 $=[0.1,0.15)$, Interval~3 $=[0.15,0.2)$, Interval~4 $=[0.2,0.25)$, Interval~5 $=[0.25,0.3)$, Interval~6 $=[0.3,0.35)$ and Interval~7 $=[0.35,1]$. Each image in the data set is therefore labeled with a number between 1 and 7, indicating the interval to which its histogram belongs. Let $h(X)$ denote the labels corresponding to the intervals. Clearly, the labels to the intervals are deterministic functions of the information source $X$. 
In this example, the private attribute denoted by a binary RV $S$, indicates the presence of digit $5$. Furthermore, let RV $Z$ denote the label of each image that is a number from $0$ to $9$ indicating the digit which the image illustrates, i.e., $Z\in\{0,1,...,9\}$. We can choose $Z$ to be the semantic and clearly it is a deterministic function of $X$, i.e., $Z=f(X)$. 
Since $Z$ contains the information about the sensitive data, it cannot be transmitted directly. 
Here, we use empirical distribution for $Z$ and it can be seen that $S$ is a deterministic function of $Z$ which yields the Markov chain $S-Z-h(X)$. Using the Markov chain $S-Z-h(X)$, the kernel $P_{S|h(X)}$ can be obtained 
and $P_{Z|h(X)}$ can be found from the data set. We use empirical distribution to find the kernel $P_{h(X)|Z}$, e.g., $P_{h(X)=1|Z=1}$ is the number of images illustrating digit $1$ which corresponds to label $1$ (their quantized histogram belong to the first interval) divided by total number of images with digit $1$. We sweep $\epsilon$ to evaluate our lower and upper bounds derived in Theorem 1. In Fig.~\ref{ajab3}, lower bounds $L^{1}(\epsilon)$ and $L^{2}(\epsilon)$ and upper bound $H(f(X)|S)+\epsilon+H(h(X)|f(X))$ are illustrated for different leakages. In this figure, we compare the bounds. We can see that the first lower bound is dominant and the gap between $L^{1}(\epsilon)$ and the upper bound is $H(h(X)|f(X))+H(f(X)|h(X))\simeq 1.4$ nats. To plot the second lower bound we have used $[L^2(\epsilon)]^+$ where $[x]^+=\max\{0,x\}$. Furthermore, the second lower bound approaches the first bound as the leakage increases. When $\epsilon=H(S)$ we have $\alpha=1$ which yields $L^{1}(\epsilon)=L^{2}(\epsilon)$.
\begin{figure}[]
	\centering
	\includegraphics[scale=0.17]{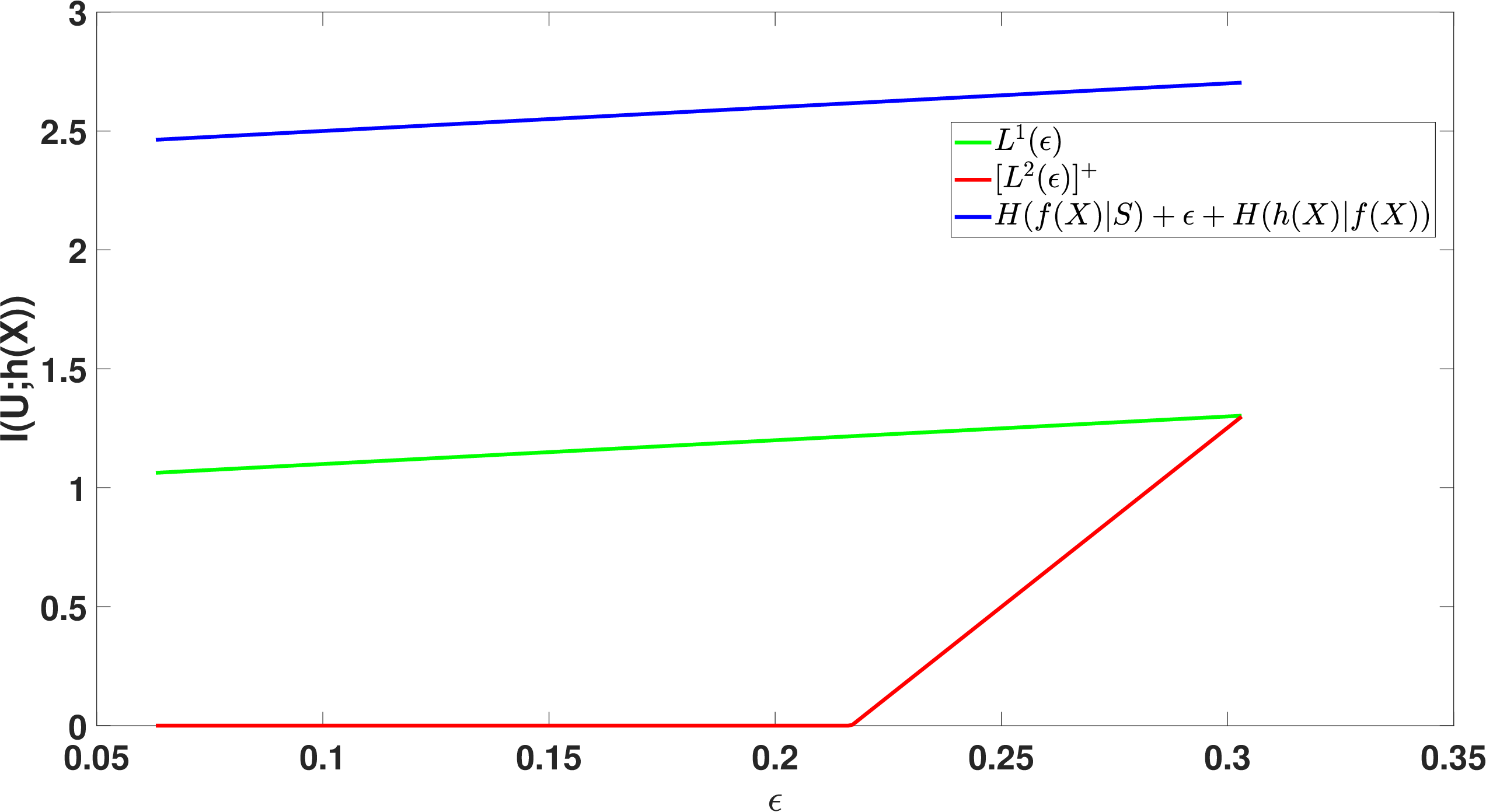}
	\caption{Comparing the lower and upper bounds obtained in Theorem 1. In this example, the gap between the upper bound and the first lower bound is $H(h(X)|f(X))+H(f(X)|h(X))\simeq 1.4$ nats.}
	\label{ajab3}
\end{figure}
\vspace{-6mm}
\section{conclusion}\label{concull}
We have introduced 
a semantic communication with privacy constraint with two separate and blind encoders. It has been shown that using extended versions of the FRL and the SFRL lower bounds on $h_{\epsilon}(P_{S,f(X)})$ and the utility attained by the user $I(U;h(X))$ are obtained. 
When the private data $S$ is a deterministic function of the semantic $f(X)$, the upper bound on $h_{\epsilon}(P_{S,f(X)})$ is achieved. Also, this statement is generalized by using the concept of common information.  
Finally, we have studied the bounds considering different scenarios. The gap between the upper and lower bounds on the utility $I(U;h(X))$ is studied considering different scenarios and it has been shown that under some constraints it can be small.

\newpage
\bibliographystyle{IEEEtran}
\bibliography{IEEEabrv,IZS}

\begin{thebibliography}{10}
\providecommand{\url}[1]{#1}
\csname url@samestyle\endcsname
\providecommand{\newblock}{\relax}
\providecommand{\bibinfo}[2]{#2}
\providecommand{\BIBentrySTDinterwordspacing}{\spaceskip=0pt\relax}
\providecommand{\BIBentryALTinterwordstretchfactor}{4}
\providecommand{\BIBentryALTinterwordspacing}{\spaceskip=\fontdimen2\font plus
\BIBentryALTinterwordstretchfactor\fontdimen3\font minus
  \fontdimen4\font\relax}
\providecommand{\BIBforeignlanguage}[2]{{%
\expandafter\ifx\csname l@#1\endcsname\relax
\typeout{** WARNING: IEEEtran.bst: No hyphenation pattern has been}%
\typeout{** loaded for the language `#1'. Using the pattern for}%
\typeout{** the default language instead.}%
\else
\language=\csname l@#1\endcsname
\fi
#2}}
\providecommand{\BIBdecl}{\relax}
\BIBdecl

\bibitem{sherdeniz}
D.~G{\"u}nd{\"u}z, Z.~Qin, I.~E. Aguerri, H.~S. Dhillon, Z.~Yang, A.~Yener,
  K.~K. Wong, and C.-B. Chae, ``Beyond transmitting bits: Context, semantics,
  and task-oriented communications,'' \emph{IEEE Journal on Selected Areas in
  Communications}, vol.~41, no.~1, pp. 5--41, 2023.

\bibitem{feist2022significance}
J.~Feist, \emph{Significance in language: A theory of semantics}.\hskip 1em
  plus 0.5em minus 0.4em\relax Taylor \& Francis, 2022.

\bibitem{strinati2024goal}
E.~C. Strinati, P.~Di~Lorenzo, V.~Sciancalepore, A.~Aijaz, M.~Kountouris,
  D.~G{\"u}nd{\"u}z, P.~Popovski, M.~Sana, P.~A. Stavrou, B.~Soret
  \emph{et~al.}, ``Goal-oriented and semantic communication in 6g ai-native
  networks: The 6g-goals approach,'' \emph{arXiv preprint arXiv:2402.07573},
  2024.

\bibitem{zamani2024semantic}
A.~Zamani, S.~Daei, T.~J. Oechtering, D.~G{\"u}nd{\"u}z, and M.~Skoglund,
  ``Semantic communications with privacy constraints,'' in \emph{International
  Zurich Seminar on Information and Communication (IZS 2024).
  Proceedings}.\hskip 1em plus 0.5em minus 0.4em\relax ETH Z{\"u}rich, 2024,
  pp. 154--158.

\bibitem{zamani2024multi}
A.~Zamani, S.~Daei, T.~J. Oechtering, and M.~Skoglund, ``Multi-task private
  semantic communication,'' in \emph{2024 IEEE International Symposium on
  Information Theory (ISIT)}, 2024, pp. 3480--3485.

\bibitem{task}
H.~Xie, Z.~Qin, X.~Tao, and K.~B. Letaief, ``Task-oriented multi-user semantic
  communications,'' \emph{IEEE Journal on Selected Areas in Communications},
  vol.~40, no.~9, pp. 2584--2597, 2022.

\bibitem{task2}
G.~Zhang, Q.~Hu, Z.~Qin, Y.~Cai, and G.~Yu, ``A unified multi-task semantic
  communication system with domain adaptation,'' in \emph{GLOBECOM 2022 - 2022
  IEEE Global Communications Conference}, 2022, pp. 3971--3976.

\bibitem{task3}
Y.~Wang, S.~Guo, Y.~Deng, H.~Zhang, and Y.~Fang, ``Privacy-preserving
  task-oriented semantic communications against model inversion attacks,''
  \emph{IEEE Transactions on Wireless Communications}, pp. 1--1, 2024.

\bibitem{photis1}
P.~A. Stavrou and M.~Kountouris, ``The role of fidelity in goal-oriented
  semantic communication: A rate distortion approach,'' \emph{IEEE Transactions
  on Communications}, vol.~71, no.~7, pp. 3918--3931, 2023.

\bibitem{photis2}
------, ``Goal-oriented single-letter codes for lossy joint source-channel
  coding,'' in \emph{2023 IEEE Information Theory Workshop (ITW)}, 2023, pp.
  64--69.

\bibitem{strinati20216g}
E.~C. Strinati and S.~Barbarossa, ``6g networks: Beyond shannon towards
  semantic and goal-oriented communications,'' \emph{Computer Networks}, vol.
  190, p. 107930, 2021.

\bibitem{shao2022theory}
Y.~Shao, Q.~Cao, and D.~G{\"u}nd{\"u}z, ``A theory of semantic communication,''
  \emph{arXiv preprint arXiv:2212.01485}, 2022.

\bibitem{borz}
B.~{Rassouli} and D.~{G\"{u}nd\"{u}z}, ``On perfect privacy,'' \emph{IEEE
  Journal on Selected Areas in Information Theory}, vol.~2, no.~1, pp.
  177--191, 2021.

\bibitem{khodam}
A.~Zamani, T.~J. Oechtering, and M.~Skoglund, ``A design framework for strongly
  $\chi^2$-private data disclosure,'' \emph{IEEE Transactions on Information
  Forensics and Security}, vol.~16, pp. 2312--2325, 2021.

\bibitem{Khodam22}
{A. Zamani, T. J. Oechtering, and M. Skoglund}, ``Data disclosure with non-zero
  leakage and non-invertible leakage matrix,'' \emph{IEEE Transactions on
  Information Forensics and Security}, vol.~17, pp. 165--179, 2022.

\bibitem{kostala}
Y.~Y. Shkel, R.~S. Blum, and H.~V. Poor, ``Secrecy by design with applications
  to privacy and compression,'' \emph{IEEE Transactions on Information Theory},
  vol.~67, no.~2, pp. 824--843, 2021.

\bibitem{shah}
A.~Zamani, T.~J. Oechtering, and M.~Skoglund, ``On the privacy-utility
  trade-off with and without direct access to the private data,'' \emph{IEEE
  Transactions on Information Theory}, vol.~70, no.~3, pp. 2177--2200, 2024.

\bibitem{makhdoumi}
A.~Makhdoumi, S.~Salamatian, N.~Fawaz, and M.~M{\'e}dard, ``From the
  information bottleneck to the privacy funnel,'' in \emph{2014 IEEE
  Information Theory Workshop}, 2014, pp. 501--505.

\bibitem{dwork1}
C.~Dwork, F.~McSherry, K.~Nissim, and A.~Smith, ``Calibrating noise to
  sensitivity in private data analysis,'' in \emph{Theory of cryptography
  conference}.\hskip 1em plus 0.5em minus 0.4em\relax Springer, 2006, pp.
  265--284.

\bibitem{shahab}
\BIBentryALTinterwordspacing
S.~Asoodeh, M.~Diaz, F.~Alajaji, and T.~Linder, ``Information extraction under
  privacy constraints,'' \emph{Information}, vol.~7, no.~1, 2016. [Online].
  Available: \url{https://www.mdpi.com/2078-2489/7/1/15}
\BIBentrySTDinterwordspacing

\bibitem{9457633}
T.-Y. Liu and I.-H. Wang, ``Privacy-utility tradeoff with nonspecific tasks:
  Robust privatization and minimum leakage,'' in \emph{2020 IEEE Information
  Theory Workshop (ITW)}, 2021, pp. 1--5.

\bibitem{e25040679}
\BIBentryALTinterwordspacing
M.~A. Zarrabian, N.~Ding, and P.~Sadeghi, ``On the lift, related privacy
  measures, and applications to privacy–utility trade-offs,'' \emph{Entropy},
  vol.~25, no.~4, 2023. [Online]. Available:
  \url{https://www.mdpi.com/1099-4300/25/4/679}
\BIBentrySTDinterwordspacing

\bibitem{liu2020robust}
T.-Y. Liu, I.~Wang \emph{et~al.}, ``Robust privatization with non-specific
  tasks and the optimal privacy-utility tradeoff,'' \emph{arXiv preprint
  arXiv:2010.10081}, 2020.

\bibitem{sankar}
L.~Sankar, S.~R. Rajagopalan, and H.~V. Poor, ``Utility-privacy tradeoffs in
  databases: An information-theoretic approach,'' \emph{IEEE Transactions on
  Information Forensics and Security}, vol.~8, no.~6, pp. 838--852, 2013.

\bibitem{Total}
B.~Rassouli and D.~{G\"{u}nd\"{u}z}, ``Optimal utility-privacy trade-off with
  total variation distance as a privacy measure,'' \emph{IEEE Transactions on
  Information Forensics and Security}, vol.~15, pp. 594--603, 2020.

\bibitem{sankar2}
J.~{Liao}, O.~{Kosut}, L.~{Sankar}, and F.~P. {Calmon}, ``Tunable measures for
  information leakage and applications to privacy-utility tradeoffs,''
  \emph{IEEE Transactions on Information Theory}, vol.~65, no.~12, pp.
  8043--8066, Dec 2019.

\bibitem{deniz4}
B.~{Rassouli}, F.~{Rosas}, and D.~{G\"{u}nd\"{u}z}, ``Latent feature disclosure
  under perfect sample privacy,'' in \emph{2018 IEEE International Workshop on
  Information Forensics and Security}, Dec 2018, pp. 1--7.

\bibitem{asoodeh3}
S.~{Asoodeh}, M.~{Diaz}, F.~{Alajaji}, and T.~{Linder}, ``Estimation efficiency
  under privacy constraints,'' \emph{IEEE Transactions on Information Theory},
  vol.~65, no.~3, pp. 1512--1534, March 2019.

\bibitem{Calmon1}
F.~P. {Calmon} and N.~{Fawaz}, ``Privacy against statistical inference,'' in
  \emph{2012 50th Annual Allerton Conference on Communication, Control, and
  Computing}, Oct 2012, pp. 1401--1408.

\bibitem{nekouei2}
E.~Nekouei, T.~Tanaka, M.~Skoglund, and K.~H. Johansson,
  ``Information-theoretic approaches to privacy in estimation and control,''
  \emph{Annual Reviews in Control}, 2019.

\bibitem{issa}
I.~{Issa}, S.~{Kamath}, and A.~B. {Wagner}, ``An operational measure of
  information leakage,'' in \emph{2016 Annual Conference on Information Science
  and Systems}, March 2016, pp. 234--239.

\bibitem{oof}
E.~Erdemir, P.~L. Dragotti, and D.~G{\"u}nd{\"u}z, ``Active privacy-utility
  trade-off against inference in time-series data sharing,'' \emph{IEEE Journal
  on Selected Areas in Information Theory}, vol.~4, pp. 159--173, 2023.

\bibitem{zamanistatistical}
A.~Zamani, T.~J. Oechtering, and M.~Skoglund, ``Statistical privacy mechanism
  design using separation technique,'' in \emph{2024 32nd European Signal
  Processing Conference (EUSIPCO)}.\hskip 1em plus 0.5em minus 0.4em\relax
  IEEE, 2024, pp. 735--739.

\bibitem{warner}
S.~L. Warner, ``Randomized response: A survey technique for eliminating evasive
  answer bias,'' \emph{Journal of the American Statistical Association},
  vol.~60, no. 309, pp. 63--69, 1965.

\bibitem{wyner}
A.~D. {Wyner}, ``The wire-tap channel,'' \emph{The Bell System Technical
  Journal}, vol.~54, no.~8, pp. 1355--1387, 1975.

\end{thebibliography}


\begin{thebibliography}{10}
\providecommand{\url}[1]{#1}
\csname url@samestyle\endcsname
\providecommand{\newblock}{\relax}
\providecommand{\bibinfo}[2]{#2}
\providecommand{\BIBentrySTDinterwordspacing}{\spaceskip=0pt\relax}
\providecommand{\BIBentryALTinterwordstretchfactor}{4}
\providecommand{\BIBentryALTinterwordspacing}{\spaceskip=\fontdimen2\font plus
\BIBentryALTinterwordstretchfactor\fontdimen3\font minus
  \fontdimen4\font\relax}
\providecommand{\BIBforeignlanguage}[2]{{%
\expandafter\ifx\csname l@#1\endcsname\relax
\typeout{** WARNING: IEEEtran.bst: No hyphenation pattern has been}%
\typeout{** loaded for the language `#1'. Using the pattern for}%
\typeout{** the default language instead.}%
\else
\language=\csname l@#1\endcsname
\fi
#2}}
\providecommand{\BIBdecl}{\relax}
\BIBdecl

\bibitem{rassoul1}
B.~Rassouli and D.~G\"{u}nd\"{u}z, ``On perfect privacy and maximal
  correlation,'' \emph{arXiv preprint arXiv:1712.08500}, 2017.

\bibitem{makhdoumi}
A.~Makhdoumi, S.~Salamatian, N.~Fawaz, and M.~M{\'e}dard, ``From the
  information bottleneck to the privacy funnel,'' in \emph{2014 IEEE
  Information Theory Workshop (ITW 2014)}.\hskip 1em plus 0.5em minus
  0.4em\relax IEEE, 2014, pp. 501--505.

\bibitem{tishby}
N.~Tishby, F.~C. Pereira, and W.~Bialek, ``The information bottleneck method,''
  \emph{arXiv preprint physics/0004057}, 2000.

\bibitem{yamamoto}
H.~Yamamoto, ``A source coding problem for sources with additional outputs to
  keep secret from the receiver or wiretappers (corresp.),'' \emph{IEEE
  Transactions on Information Theory}, vol.~29, no.~6, pp. 918--923, 1983.

\bibitem{sankar}
L.~Sankar, S.~R. Rajagopalan, and H.~V. Poor, ``Utility-privacy tradeoffs in
  databases: An information-theoretic approach,'' \emph{IEEE Transactions on
  Information Forensics and Security}, vol.~8, no.~6, pp. 838--852, 2013.

\bibitem{dwork1}
C.~Dwork, F.~McSherry, K.~Nissim, and A.~Smith, ``Calibrating noise to
  sensitivity in private data analysis,'' in \emph{Theory of cryptography
  conference}.\hskip 1em plus 0.5em minus 0.4em\relax Springer, 2006, pp.
  265--284.

\bibitem{dwork2}
C.~Dwork, ``Differential privacy, in automata, languages and programming,''
  \emph{ser. Lecture Notes in Computer Scienc}, vol. 4052, p. 112, 2006.

\bibitem{oech}
Z.~Li, T.~J. Oechtering, and D.~G{\"u}nd{\"u}z, ``Privacy against a hypothesis
  testing adversary,'' \emph{IEEE Transactions on Information Forensics and
  Security}, vol.~14, no.~6, pp. 1567--1581, 2018.

\bibitem{borade}
S.~Borade and L.~Zheng, ``Euclidean information theory,'' in \emph{2008 IEEE
  International Zurich Seminar on Communications}.\hskip 1em plus 0.5em minus
  0.4em\relax IEEE, 2008, pp. 14--17.

\bibitem{huang}
S.-L. Huang and L.~Zheng, ``Linear information coupling problems,'' in
  \emph{2012 IEEE International Symposium on Information Theory
  Proceedings}.\hskip 1em plus 0.5em minus 0.4em\relax IEEE, 2012, pp.
  1029--1033.

\bibitem{huang2}
S.-L. Huang, C.~Suh, and L.~Zheng, ``Euclidean information theory of
  networks,'' \emph{IEEE Transactions on Information Theory}, vol.~61, no.~12,
  pp. 6795--6814, 2015.

\end{thebibliography}
\vspace{12pt}

\end{document}